\pgfplotsset{compat = newest}
\newtheorem{thm}{Theorem}
\newtheorem{defn}{Definition}
\DeclareMathOperator{\FDP}{FDP}
\DeclareMathOperator{\FDR}{FDR}
\DeclareMathOperator{\TPP}{TPP}
\DeclareMathOperator{\TPR}{TPR}
\DeclareMathOperator{\NHG}{NHG}
\DeclareMathOperator{\Var}{Var}
\DeclareMathOperator{\SNR}{SNR}
\providecommand{\customgenericname}{}
\newcommand{\newcustomtheorem}[2]{%
  \newenvironment{#1}[1]
  {%
   \renewcommand\customgenericname{#2}%
   \renewcommand\theinnercustomgeneric{##1}%
   \innercustomgeneric
  }
  {\endinnercustomgeneric}
}
\newcommand{\y}{\boldsymbol{y}}
\newcommand{\x}{\boldsymbol{x}}
\newcommand{\X}{\boldsymbol{X}}
\newcommand{\bbeta}{\boldsymbol{\beta}}
\newcommand{\bepsilon}{\boldsymbol{\epsilon}}
\newcommand{\A}{\mathcal{A}}
\newcommand{\XK}{\boldsymbol{\protect\accentset{\circ}{X}}}
\newcommand{\xK}{\boldsymbol{\protect\accentset{\circ}{x}}}
\newcommand{\xk}{\protect\accentset{\circ}{x}}
\newcommand{\XWK}{\boldsymbol{\widetilde{X}}}
\newcommand{\C}{\mathcal{C}}
\let\save@mathaccent\mathaccent
\newcommand*\if@single[3]{%
  \setbox0\hbox{${\mathaccent"0362{#1}}^H$}%
  \setbox2\hbox{${\mathaccent"0362{\kern0pt#1}}^H$}%
  \ifdim\ht0=\ht2 #3\else #2\fi
  }
\newcommand*\rel@kern[1]{\kern#1\dimexpr\macc@kerna}
\newcommand*\widebar[1]{\@ifnextchar^{{\wide@bar{#1}{0}}}{\wide@bar{#1}{1}}}
\newcommand*\wide@bar[2]{\if@single{#1}{\wide@bar@{#1}{#2}{1}}{\wide@bar@{#1}{#2}{2}}}
\newcommand*\wide@bar@[3]{%
  \begingroup
  \def\mathaccent##1##2{%
    \let\mathaccent\save@mathaccent
    \if#32 \let\macc@nucleus\first@char \fi
    \setbox\z@\hbox{$\macc@style{\macc@nucleus}_{}$}%
    \setbox\tw@\hbox{$\macc@style{\macc@nucleus}{}_{}$}%
    \dimen@\wd\tw@
    \advance\dimen@-\wd\z@
    \divide\dimen@ 3
    \@tempdima\wd\tw@
    \advance\@tempdima-\scriptspace
    \divide\@tempdima 10
    \advance\dimen@-\@tempdima
    \ifdim\dimen@>\z@ \dimen@0pt\fi
    \rel@kern{0.6}\kern-\dimen@
    \if#31
      \overline{\rel@kern{-0.6}\kern\dimen@\macc@nucleus\rel@kern{0.4}\kern\dimen@}%
      \advance\dimen@0.4\dimexpr\macc@kerna
      \let\final@kern#2%
      \ifdim\dimen@<\z@ \let\final@kern1\fi
      \if\final@kern1 \kern-\dimen@\fi
    \else
      \overline{\rel@kern{-0.6}\kern\dimen@#1}%
    \fi
  }%
  \macc@depth\@ne
  \let\math@bgroup\@empty \let\math@egroup\macc@set@skewchar
  \mathsurround\z@ \frozen@everymath{\mathgroup\macc@group\relax}%
  \macc@set@skewchar\relax
  \let\mathaccentV\macc@nested@a
  \if#31
    \macc@nested@a\relax111{#1}%
  \else
    \def\gobble@till@marker##1\endmarker{}%
    \futurelet\first@char\gobble@till@marker#1\endmarker
    \ifcat\noexpand\first@char A\else
      \def\first@char{}%
    \fi
    \macc@nested@a\relax111{\first@char}%
  \fi
  \endgroup
}
\DeclareFontFamily{U}{dutchcal}{\skewchar\font=45}
\DeclareFontShape{U}{dutchcal}{m}{n}{<-> s*[1.2] dutchcal-r}{}
\DeclareFontShape{U}{dutchcal}{b}{n}{<-> s*[1.2] dutchcal-b}{}
\DeclareMathAlphabet{\mathdutchcal}{U}{dutchcal}{m}{n}
\SetMathAlphabet{\mathdutchcal}{bold}{U}{dutchcal}{b}{n}
\DeclareMathAlphabet{\mathdutchbcal}{U}{dutchcal}{b}{n}
\newlist{steps}{enumerate}{1}
\setlist[steps, 1]{label = {Step \arabic*:}, ref = {Step \arabic*}}
\newlist{alglist}{enumerate}{1}
\setlist[alglist, 1]{label = {\arabic*.}, ref = {\arabic*}}
\newcommand{\cmark}{\ding{51}}
\newcolumntype{L}[1]{>{\raggedright\arraybackslash}p{#1}}
\newcolumntype{C}[1]{>{\centering\arraybackslash}p{#1}}
\newcolumntype{R}[1]{>{\raggedleft\arraybackslash}p{#1}}
\definecolor{dark_green}{RGB}{102,166,30}
\definecolor{applegreen}{rgb}{0.55, 0.71, 0.0}
\definecolor{dark_red}{RGB}{217,95,2}
\definecolor{bittersweet}{rgb}{1.0, 0.44, 0.37}
\definecolor{dark_yellow}{RGB}{230,171,2}
\definecolor{bananayellow}{rgb}{1.0, 0.88, 0.21}
\newcommand{\placetextbox}[3]{
  \setbox0=\hbox{#3}
  \AddToShipoutPictureFG*{
    \put(\LenToUnit{#1\paperwidth},\LenToUnit{#2\paperheight}){\vtop{{\null}\makebox[0pt][c]{#3}}}%
  }%
}%
\def\BibTeX{{\rm B\kern-.05em{\sc i\kern-.025em b}\kern-.08em
    T\kern-.1667em\lower.7ex\hbox{E}\kern-.125emX}}
\begin{document}

\setstretch{0.96}

\title{False Discovery Rate Control for\\ Fast Screening of Large-Scale Genomics Biobanks}

\author{

\IEEEauthorblockN{Jasin Machkour}
\IEEEauthorblockA{
\textit{Technische Universit\"at Darmstadt} \\
64283 Darmstadt, Germany \\
jasin.machkour@tu-darmstadt.de}
\and
\IEEEauthorblockN{Michael Muma}
\IEEEauthorblockA{
\textit{Technische Universit\"at Darmstadt} \\
64283 Darmstadt, Germany \\
michael.muma@tu-darmstadt.de}
\and
\IEEEauthorblockN{Daniel P. Palomar}
\IEEEauthorblockA{
\textit{The Hong Kong University of Science and Technology}\\
Clear Water Bay, Hong Kong SAR, China\\
palomar@ust.hk}

\thanks{The first and second author are supported by the LOEWE initiative (Hesse, Germany) within the emergenCITY center. The second author is also supported by the ERC Starting Grant ScReeningData. The third author is supported by the Hong Kong GRF 16207820 research grant.}
\thanks{Extensive calculations on the Lichtenberg High-Performance Computer of the Technische Universität Darmstadt were conducted for this research.}
}

\maketitle

\begin{abstract}
Genomics biobanks are information treasure troves with thousands of phenotypes (e.g., diseases, traits) and millions of single nucleotide polymorphisms (SNPs). The development of methodologies that provide reproducible discoveries is essential for the understanding of complex diseases and precision drug development. Without statistical reproducibility guarantees, valuable efforts are spent on researching false positives. Therefore, scalable multivariate and high-dimensional false discovery rate (FDR)-controlling variable selection methods are urgently needed, especially, for complex polygenic diseases and traits. In this work, we propose the Screen-T-Rex selector, a fast FDR-controlling method based on the recently developed T-Rex selector. The method is tailored to screening large-scale biobanks and it does not require choosing additional parameters (sparsity parameter, target FDR level, etc). Numerical simulations and a real-world HIV-1 drug resistance example demonstrate that the performance of the Screen-T-Rex selector is superior, and its computation time is multiple orders of magnitude lower compared to current benchmark knockoff methods.
\end{abstract}

\begin{IEEEkeywords}
Screen-T-Rex selector, FDR control, high-dimensional variable selection, GWAS, HIV-1 drug resistance.
\end{IEEEkeywords}
\placetextbox{0.5}{0.08}{\fbox{\parbox{\dimexpr\textwidth-2\fboxsep-2\fboxrule\relax}{\footnotesize Published in IEEE Statistical Signal Processing Workshop (SSP), 2-5 July 2023, Hanoi, Vietnam.}}}
\section{Introduction}
\label{sec: Introduction}
The systematic screening of large-scale genomics biobanks enables understanding complex diseases and aids in drug development~\cite{uffelmann2021genome}. Achieving these goals requires finding the few reproducible associations among potentially millions of single nucleotide polymorphisms (SNPs) and a phenotype, i.e., a disease or trait of interest. This allows to further study potentially functionally associated regions on the genome. Large biobanks, such as the UK biobank~\cite{sudlow2015uk}, contain thousands of phenotypes and large ultra-high-dimensional genomics data, where the number of variables (i.e., SNPs) $p$ is much larger than the number of observations $n$. The above described genome-wide association studies (GWAS)~\cite{gwasCatalog} require time and cost intensive follow-up investigations. Hence, it is of utmost importance to keep the number of false discoveries low while discovering as many associations as possible. Therefore, we will consider two metrics:
\begin{enumerate}[label=\arabic*., ref=\arabic*]
\itemsep0em
\item The false discovery rate (FDR) is the expected value of the false discovery proportion (FDP), i.e., the expected percentage of false discoveries among all discoveries: $\FDR \coloneqq \mathbb{E} \big[ \FDP \big] \coloneqq \mathbb{E} \big[ \text{\# False discoveries} / \text{\# Discoveries} \big]$.
\item The true positive rate (TPR) is the expected value of the true positive proportion (TPP), i.e., the expected percentage of true discoveries among all true active variables: $\TPR \coloneqq \mathbb{E} \big[ \TPP \big] \coloneqq \mathbb{E} \big[ \text{\# True discoveries} / \text{\# True actives} \big]$.
\end{enumerate}

Existing FDR-controlling methods allow the user to set a target FDR $\alpha \in [0, 1]$ and select variables such that the FDR is controlled at the target level (i.e., $\alpha$ is not exceeded) while maximizing the number of selected variables and, thus, implicitly maximizing the TPR. Popular methods for low-dimensional settings (i.e., $n \geq p$) are the Benjamini-Hochberg (\textit{BH}) method~\cite{benjamini1995controlling}, the Benjamini-Yekutielli (\textit{BY}) method~\cite{benjamini2001control}, and the more recent \textit{fixed-X} knockoff methods~\cite{barber2015controlling}. Unfortunately, these methods are not applicable for the multivariate analysis of high-dimensional ($p > n$) settings such as GWAS.

In recent years, multivariate FDR-controlling methods for high-dimensional multivariate GWAS have been proposed: \textit{model-X} (and related) knockoff methods~\cite{candes2018panning,barber2018robust,barber2019knockoff,sesia2019gene} and \textit{T-Rex} selector methods~\cite{machkour2021terminating,machkour2022TRexGVS,machkour2022TRexSelector,machkour2022tlars}. However, Figure~1 in~\cite{machkour2021terminating} shows that only the \textit{T-Rex} selector is scalable to millions of variables in a reasonable computation time, while the computation time of the \textit{model-X} methods is multiple orders of magnitude higher and, thus, the method becomes practically infeasible in large-scale settings. Nevertheless, even the comparably low computation time of the \textit{T-Rex} selector for one phenotype might become a burden when conducting GWAS for many phenotypes.

Therefore, we propose the \textit{Screen-T-Rex} selector, a fast version of the \textit{T-Rex} selector. The proposed FDR-controlling method is suitable for conducting large-scale GWAS (with up to millions of SNPs) for tens of thousands of phenotypes. It does not ask the user to set a target FDR level, but provides the user with an estimate of the achieved FDR. In the cases, where the user is not satisfied with the provided FDR estimate, the original \textit{T-Rex} selector should be used with the result of the Screen-T-Rex selector and the desired target FDR as inputs. The proposed \textit{Screen-T-Rex} selector has the following three major innovations/advantages:
\begin{enumerate}[label=\arabic*., ref=\arabic*]
\itemsep0em
\item It provably controls the FDR at the self-estimated level (see Theorems~\ref{theorem: FDR control Screen T-Rex (no bootstrap)} and~\ref{theorem: FDR control Screen T-Rex (with bootstrap)} in Section~\ref{sec: Proposed: Screen-T-Rex Selector}).
\item It does not require the choice of any additional parameters (sparsity parameter, target FDR level, etc.).
\item Its computation time is approximately one order of magnitude lower than that of the original \textit{T-Rex} selector and more than three orders of magnitude lower than that of the \textit{model-X} knockoff methods in our simulations (see Table~\ref{table: results simulated GWAS}).
\end{enumerate}

Organization: Section~\ref{sec: The T-Rex Selector} briefly revisits the original \textit{T-Rex} selector. In Section~\ref{sec: Proposed: Screen-T-Rex Selector}, the \textit{Screen-T-Rex} selector is proposed. Sections~\ref{sec: Numerical Experiments},~\ref{sec: Simulated GWAS}, and~\ref{sec: Real World Example: HIV Data}, compare the proposed method against benchmark methods via numerical simulations, a simulated GWAS, and a real world HIV-1 drug resistance study, respectively. Section~\ref{sec: Conclusion} concludes the paper.

\section{The T-Rex Selector}
\label{sec: The T-Rex Selector}
The \textit{T-Rex} selector~\cite{machkour2021terminating} is a fast FDR-controlling variable selection method for high-dimensional ($p > n$) as well as low-dimensional ($n \geq p$) settings. Figure~\ref{fig: T-Rex selector framework} shows a simplified sketch of the \textit{T-Rex} selector framework. It requires the following inputs:
\begin{enumerate}[label=\arabic*., ref=\arabic*]
\itemsep0em
\item A predictor matrix $\X = [ \x_{1} \, \cdots \, \x_{p} ]$ that, e.g., contains $p$ SNPs $\x_{1}, \ldots, \x_{p}$ as columns, where $\x_{j} = [ x_{1j} \, \cdots \, x_{nj} ]^{\top}$ contains $n$ observations of the $j$th SNP. That is, the $i$th row of $\X$ contains the measurements of all SNPs for the $i$th subject.
\item A response vector $\y = [ y_{1} \, \cdots \, y_{n} ]^{\top}$ that, e.g., contains the phenotypes of all $n$ subjects. These can be measurements of the disease progression or, in a simple case-control study, the value ``$1$'' for cases and the value ``$0$'' for controls.
\item The target FDR level $\alpha \in [0, 1]$.
\end{enumerate}
\begin{figure}[t]
\begin{center}
\scalebox{0.625}{
\begin{tikzpicture}[>=stealth]

  \coordinate (orig)   at (0,0);
  \coordinate (sample)   at (1,0.5);
  \coordinate (merge)   at (3,0.5);
  \coordinate (varSelect)   at (5,0.5);
  \coordinate (tFDR)   at (15.5,-1.1);
  \coordinate (fuse)   at (7.5,0.5);
  \coordinate (output)   at (9.5,0.5);
  
  \coordinate (between_scale_rank)   at (0.5,0.31);
  \coordinate (X_prime_to_tFDR_point)   at (0.5,6);
  \coordinate (X_prime_to_tFDR_point_point)   at (9,6);
  \coordinate (X_prime_to_merge_point)   at (0.5,-2.5);
  \coordinate (center_to_tFDR_point)   at (5.00,3.7);
  \coordinate (tFDR_to_fuse_point)   at (7.5,3.7);
  \coordinate (tFDR_to_sample_point)   at (4,5);

  \coordinate (Arrow_N_GenDummy)   at (1,3.06);
  \coordinate (Arrow_X_indVar)   at (3,3.06);
  \coordinate (Arrow_targetFDR_tFDR)   at (10,5.9);
  
   \coordinate (inference_Arrow)   at (15,-1.06);
   \coordinate (fuse_Arrow)   at (16.1,2.06);
  
  \coordinate (vdots1)   at (2.0,0.4);
  \coordinate (vdots2)   at (4.0,0.4);
  \coordinate (vdots3)   at (6.25,0.4);
  
  \coordinate (fuse_node)   at (15.00,0.5);
  
  \node[draw, minimum width=.7cm, minimum height=4cm, anchor=center , align=center] (C) at (sample) {\rotatebox{90}{\Large Generate Dummies}};
  \node[draw, minimum width=.7cm, minimum height=4cm, anchor=center, align=center] (D) at (merge) {\rotatebox{90}{\Large Append}};   
  \node[draw, minimum width=.7cm, minimum height=5.5cm, anchor=center, align=center] (E) at (varSelect) {\rotatebox{90}{\Large Forward Variable Selection}};
  \node[draw, minimum width=.7cm, minimum height=5.5cm, anchor=center, align=center] (H) at (fuse) {\rotatebox{90}{\Large Calibrate \& Fuse}};
  \node[draw, minimum width=2.5cm, minimum height=.7cm, anchor=center, align=center] (N) at (output) {\Large Output: \\[0.3em] \Large $\widehat{\mathcal{A}}_{L}(v^{*}, T^{*})$};
  \node (J) at (vdots1) {\Large $\vdots$};
  \node (K) at (vdots2) {\Large $\vdots$};
  \node (L) at (vdots3) {\Large $\vdots$};
  
  \draw[->] (Arrow_N_GenDummy) -- node[above, pos = 0.1]{\large $\sim\mathcal{N}(0, 1)$} ($(C.90)$); 
  \draw[->] (Arrow_X_indVar) -- node[above, pos = 0.1]{\large $\X$} ($(D.90)$); 
     
  \draw[->] ($(C.0) + (0,1.5)$) -- node[above]{\large $\XK_{1}$} ($(D.0) + (-0.7,1.5)$);
  \draw[->] ($(C.0) + (0,0.75)$) -- node[above]{\large $\XK_{2}$} ($(D.0) + (-0.7,0.75)$);
  \draw[->] ($(C.0) + (0,-1.5)$) -- node[above]{\large $\XK_{K}$} ($(D.0) + (-0.7,-1.5)$);
     
  \draw[->] ($(D.0) + (0,1.5)$) -- node[above]{\large $\XWK_{1}$} ($(E.0) + (-0.7,1.5)$);
  \draw[->] ($(D.0) + (0,0.75)$) -- node[above]{\large $\XWK_{2}$} ($(E.0) + (-0.7,0.75)$);
  \draw[->] ($(D.0) + (0,-1.5)$) -- node[above]{\large $\XWK_{K}$} ($(E.0) + (-0.7,-1.5)$);
     
  \draw[->] ($(E.0) + (0,1.5)$) -- node[above]{\large $\C_{1, L}(T)$} ($(H.0) + (-0.7,1.5)$);
  \draw[->] ($(E.0) + (0,0.75)$) -- node[above]{\large $\C_{2, L}(T)$} ($(H.0) + (-0.7,0.75)$);
  \draw[->] ($(E.0) + (0,-1.5)$) -- node[above]{\large $\C_{K, L}(T)$} ($(H.0) +  (-0.7,-1.5)$);
     
  \draw[->] (center_to_tFDR_point) -- node[above, pos = 0.1]{\large $\y$} ($(E.90)$); 
  \draw[->] (tFDR_to_fuse_point) -- node[above, pos = 0.1]{\Large $\alpha$} ($(H.90)$);
 
  \coordinate (between_varSelect_fuse1)   at ($(E.0) + (2.75,1.5)$);
  \coordinate (between_varSelect_fuse2)   at ($(E.0) + (2.75,0.75)$);
  \coordinate (between_varSelect_fuse3)   at ($(E.0) + (2.75,-1.5)$);
 
  \draw[->] (H) -- (N);
\end{tikzpicture}}
\end{center}
\setlength{\belowcaptionskip}{-8pt}
\caption{Sketch of the \textit{T-Rex} selector framework~\cite{machkour2022TRexGVS}.}
\label{fig: T-Rex selector framework}
\end{figure}
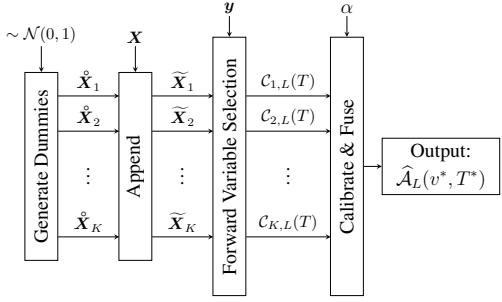

First, the \textit{T-Rex} selector generates $K$ dummy predictor matrices $\XK_{k} = [\xK_{1} \, \cdots \, \xK_{L}]$, $k = 1, \ldots, K$ (containing $L$ dummy predictors), where each element of the $l$th dummy predictor $\xK_{l} = [ \xk_{1l} \, \cdots \, \xk_{nl} ]^{\top}$ is sampled from a univariate standard normal distribution. After appending each dummy matrix to the original predictor matrix, $K$ independent random experiments are conducted by applying a forward variable selection algorithm to each extended predictor matrix $\XWK_{k} = [ \X \,\, \XK_{k} ]$, $k = 1, \ldots, K$, where the response $\y$ acts as the supervising vector. The forward variable selection algorithm includes one variable at a time and terminates after $T \geq 1$ dummies have been included. In~\cite{machkour2021terminating}, it is proposed to use the LARS algorithm~\cite{efron2004least} (or related methods, see, e.g.,  \cite{tibshirani1996regression,zou2005regularization,zou2006adaptive}), which assumes a linear relationship between the predictors and the response, as the forward selector within the \textit{T-Rex} framework. Following the notation of the \textit{T-Rex} selector, the linear model is defined by
\begin{equation}
\y = \X \bbeta + \bepsilon,
\label{eq: linear model}
\end{equation}
where $\bbeta$ is the sparse coefficient vector and $\bepsilon \sim \mathcal{N}(\boldsymbol{0}, \sigma^{2} \boldsymbol{I})$ is the Gaussian noise vector. The obtained candidate sets $\C_{k, L}(T)$, $k = 1, \ldots, K$, contain the included original variables (after removal of the $T$ dummy variables). In the ``Calibrate \& Fuse'' step, the relative occurrence of each original variable $j$ in the candidate sets, i.e., $\Phi_{T, L}(j) \in [0, 1]$, $j = 1, \ldots, p$, is computed. The \textit{T-Rex} calibration algorithm automatically determines
\begin{enumerate}[label=\arabic*., ref=\arabic*]
\itemsep0em
\item the number of dummies $L$,
\item the optimal number of included dummies before terminating the random experiments $T^{*}$, and
\item the optimal voting level $v^{*} \in [ 0, 1 )$
\end{enumerate}
to obtain the set of selected variables
\begin{equation}
\widehat{\A}_{L}(v^{*}, T^{*}) \coloneqq \lbrace j : \Phi_{T^{*}, L}(j) > v^{*} \rbrace
\label{eq: selected active set}
\end{equation}
such that the FDR is controlled at the target level $\alpha$, while the number of selected variables $\big| \widehat{\A}_{L}(v^{*}, T^{*}) \big|$ is maximized. Note that in accordance with the suggestion in~\cite{machkour2021terminating}, we have conducted $K = 20$ random experiments throughout this work.

\begin{defn}[FDP and FDR]
Let $R_{T, L}(v) \coloneqq \big| \widehat{\A}_{L}(v, T) \big|$ and $V_{T, L}(v) \coloneqq \big| \lbrace \text{null } j : \Phi_{T, L}(j) > v \rbrace \big|$ be the number of selected variables and the number of selected null variables (i.e., false positives), respectively. Define $a \lor b \coloneqq \max \lbrace a, b \rbrace$, $a,b \in \mathbb{R}$. Then, the FDR and the false discovery proportion (FDP) are defined by
\begin{equation}
\FDR(v, T, L) \coloneqq \mathbb{E} \big[ \FDP(v, T, L) \big] \coloneqq \mathbb{E}\bigg[ \dfrac{V_{T, L}(v)}{R_{T, L}(v) \lor 1} \bigg].
\label{eq: definition - FDR and FDP with T-Rex selector notation}
\end{equation}
\label{definition: FDR and FDP with T-Rex selector notation}
\end{defn}

\section{Proposed: Screen-T-Rex Selector}
\label{sec: Proposed: Screen-T-Rex Selector}
Two versions of the \textit{Screen-T-Rex} selector are proposed, the corresponding FDR control theorems are presented, and an algorithm for screening genomics biobanks is formulated.

\subsection{Ordinary Screen-T-Rex Selector}
\label{subsec: Ordinary Screen-T-Rex Selector}
While the original \textit{T-Rex} selector determines $T$, $L$, and $v$ such that the FDR is controlled at the user-defined target level, the \textit{Screen-T-Rex} selector fixes $(T, L, v) = (1, p, 0.5)$. This is a special case of the original \textit{T-Rex} selector that
\begin{enumerate}[label=\arabic*., ref=\arabic*]
\itemsep0em
\item is harnessed by the proposed \textit{Screen-T-Rex} selector to determine an estimator of the FDR and
\item requires a much lower computation time than the original \textit{T-Rex} selector and other benchmark methods (see Table~\ref{table: results simulated GWAS} in Section~\ref{sec: Simulated GWAS}).
\end{enumerate}

The FDR estimator of the proposed ordinary \textit{Screen-T-Rex} selector is given by
\begin{equation}
\widehat{\alpha} \coloneqq 1 / (R_{1, p}(0.5) \lor 1),
\label{eq: FDR estimator of Screen-T-Rex selector}
\end{equation}
i.e., one divided by the number of selected variables. The intuition behind this estimator is as follows: $T = 1$ dummy variable is allowed to enter the solution paths of the random experiments before terminating the forward selection processes. So, in each random experiment one out of $p$ dummies is included. Therefore, we expect, on average, no more than one out of at most $p$ null variables to be included in each candidate set $\C_{k, L}(T)$, and, consequently, no more than one null variable among all selected variables. This idea is formalized in the following FDR control result:
\begin{thm}[FDR control - ordinary \textit{Screen-T-Rex}]
Define $\widehat{\alpha} \coloneqq 1 / (R_{1, p}(0.5) \lor 1)$. Then, $\FDR = \mathbb{E}[ \FDP ] \leq \widehat{\alpha}$, i.e., the FDR is controlled at the estimated level $\widehat{\alpha}$.\footnote{For simplicity, $\FDR \coloneqq \FDR(1, p, 0.5)$ and $\FDP \coloneqq \FDP(1, p, 0.5)$.}
\label{theorem: FDR control Screen T-Rex (no bootstrap)}
\end{thm}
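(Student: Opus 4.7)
The plan is to recognize that the ordinary Screen-T-Rex selector is exactly the T-Rex selector of \cite{machkour2021terminating} run with the fixed parameter choice $(T, L, v) = (1, p, 0.5)$, and to leverage the general FDR bound already established in that work. Starting from Definition~\ref{definition: FDR and FDP with T-Rex selector notation}, we have $\FDR = \mathbb{E}[V_{1,p}(0.5)/(R_{1,p}(0.5) \lor 1)]$, and since $\widehat{\alpha} = 1/(R_{1,p}(0.5) \lor 1)$, it suffices to establish the pathwise bound $V_{1,p}(0.5) \leq 1$. This immediately yields
$$\FDP = \dfrac{V_{1,p}(0.5)}{R_{1,p}(0.5) \lor 1} \leq \dfrac{1}{R_{1,p}(0.5) \lor 1} = \widehat{\alpha}$$
almost surely, so that $\FDR \leq \widehat{\alpha}$ holds in the pathwise sense intended by the statement.

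The heart of the argument is the dummy-exchangeability mechanism of the T-Rex framework. In each of the $K$ random experiments, exactly $T = 1$ of the $L = p$ i.i.d.\ standard normal dummies is allowed to enter before termination, so the candidate set $\mathcal{C}_{k,p}(1)$ comprises only those original variables that LARS includes before the first dummy. Conditionally on the active set $\mathcal{A}$, each null column is exchangeable with every one of the $p$ dummy columns along the LARS path, so the rank of a fixed null $j$ among itself and the $p$ dummies is uniform on $\{1, \dots, p+1\}$; in particular $\Pr(j \in \mathcal{C}_{k,p}(1)) \leq 1/(p+1)$. Combining this per-experiment probability with the strict-majority voting rule $v = 0.5$ and the combinatorial bookkeeping that underlies the T-Rex FDR proof then delivers the bound $V_{1,p}(0.5) \leq 1$.

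The main obstacle is that the conditional exchangeability between null columns and dummies must survive the sequential, correlation-maximizing steps of LARS, because once active variables have already been included the residual selections are not manifestly symmetric. I would handle this by invoking the relevant exchangeability lemma from \cite{machkour2021terminating,machkour2022TRexSelector}, which formalizes the symmetry in the right conditional sense, and then simply specialize all constants to $T = 1$, $L = p$, $v = 0.5$; the algebraic collapse of the general T-Rex FDR bound to $1$ in the numerator is then a routine substitution.
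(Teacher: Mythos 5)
Your overall skeleton (factor out $\widehat{\alpha} = 1/(R_{1,p}(0.5)\lor 1)$ and then bound the numerator $V_{1,p}(0.5)$ by $1$) matches the paper's, but the step you reduce everything to --- the \emph{pathwise} bound $V_{1,p}(0.5) \leq 1$ almost surely --- is not true, and this is a genuine gap. $V_{1,p}(0.5)$ counts the null variables whose relative occurrence over the $K$ random experiments exceeds $1/2$. Nothing in the procedure caps this at one per realization: in any single random experiment the forward selector may include several null variables before the first dummy enters (the number of such nulls is a random variable whose \emph{expectation} is at most $p_0/(p+1) < 1$, but whose realizations can be $2, 3, \ldots$), and if two or more nulls each happen to appear in more than $K/2$ of the candidate sets $\C_{k,p}(1)$, then $V_{1,p}(0.5) \geq 2$. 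Your own intermediate observation --- that a fixed null $j$ satisfies $\Pr(j \in \C_{k,p}(1)) \leq 1/(p+1)$ by exchangeability with the $p$ dummies --- is a statement about probabilities, and no amount of ``combinatorial bookkeeping'' converts it into an almost-sure cap on the count; the deferred step would fail.

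The paper instead bounds the \emph{expectation}: it uses that $V_{1,p}(0.5)$ is stochastically dominated by $\NHG(p_0+p,\, p_0,\, 1)$, whose mean is $p_0/(p+1)$, so that
\begin{equation}
\FDR = \widehat{\alpha}\cdot \mathbb{E}\big[V_{1,p}(0.5)\big] \leq \widehat{\alpha}\cdot \frac{p_0}{p+1} \leq \widehat{\alpha}.
\end{equation}
To repair your argument, replace the claim ``$V_{1,p}(0.5)\leq 1$ a.s., hence $\FDP \leq \widehat{\alpha}$ a.s.'' by this expectation bound (your uniform-rank/exchangeability discussion is exactly the ingredient needed for the stochastic-dominance step, so the fix is local). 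Note also that both your write-up and the paper treat $\widehat{\alpha}$ as a constant when pulling it out of the expectation, even though $R_{1,p}(0.5)$ is data-dependent; this is a feature of the theorem as stated (FDR control at a self-estimated, random level) rather than an additional error on your part.
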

\begin{proof}
With Definition~\ref{definition: FDR and FDP with T-Rex selector notation}, we obtain
\begin{equation}
\FDP = \dfrac{V_{1, p}(0.5)}{R_{1, p}(0.5) \lor 1} = \widehat{\alpha} \cdot V_{1, p}(0.5).
\label{eq: FDP in proof of theorem - FDR control Screen T-Rex (no bootstrap)}
\end{equation}

Let $p = p_{1} + p_{0}$, where $p_{1}$ and $p_{0}$ are the number of true active and null variables, respectively. Taking the expectation of~\eqref{eq: FDP in proof of theorem - FDR control Screen T-Rex (no bootstrap)} yields
\begin{equation}
\FDR = \mathbb{E}[\FDP] 
= 
\widehat{\alpha} \cdot \mathbb{E} \big[ V_{1, p}(0.5) \big]
\leq 
\widehat{\alpha} \cdot \dfrac{p_{0}}{p + 1} \leq \widehat{\alpha},
\end{equation}
where the first inequality follows from $V_{1, p}(0.5)$ being stochastically dominated by the negative hypergeometric distribution $\NHG(p_{0} + p, p_{0}, 1)$, whose expected value is given by $p_{0} / (p + 1)$ (for details on the $\NHG$, see~\cite{machkour2021terminating}).
\label{proof: theorem - FDR control Screen T-Rex (no bootstrap)}
\end{proof}

\subsection{Confidence-Based Screen-T-Rex Selector}
\label{subsec: Confidence Interval Based Screen-T-Rex Selector}
The above proposed ordinary \textit{Screen-T-Rex} selector, as well as the original \textit{T-Rex} selector, only considers the relative occurrences of the candidate variables in the selected active sets $\C_{k, L}(T)$ and disregards the original and dummy coefficient estimates, i.e.,
\begin{enumerate}[label=\arabic*., ref=\arabic*]
\itemsep0em
\item $\hat{\beta}_{j, k}(T, L)$, $j = 1, \ldots, p$, (i.e., coefficient estimate of the $j$th original variable in the $k$th random experiment and
\item $\hat{\beta}_{l, k}^{\circ}(T, L)$, $l = 1, \ldots, L$, (i.e., coefficient estimate of the $l$th dummy variable in the $k$th random experiment.
\end{enumerate}
However, since the dummy variables act as flagged null variables (for details, see \cite{machkour2021terminating}), the coefficients of the dummies contain information about the distribution of the coefficients of the null variables. Therefore, we propose to harness the coefficient estimates of the dummies to construct a confidence interval
\begin{equation}
C(\gamma) \coloneqq [ c_{1}(\gamma), c_{2}(\gamma) ], \quad \gamma \in [0, 1],
\label{eq: confidence interval using dummy coefficient estimates}
\end{equation} 
where $c_{1}(\gamma)$ and $c_{2}(\gamma)$ are the lower and upper bound, respectively, and $\gamma$ is the confidence level. The coefficient estimates of the null variables can also be expected to lie within the same confidence interval. Therefore, instead of selecting variables based on their relative occurrences, we replace $V_{1, p}(0.5)$ and $R_{1, p}(0.5)$ in Definition~\ref{definition: FDR and FDP with T-Rex selector notation} and Theorem~\ref{theorem: FDR control Screen T-Rex (no bootstrap)} by
\begin{align}
& V_{1, p}^{(C)}(\gamma) \coloneqq \big| \big\lbrace \text{null } j : \widebar{\hat{\beta}}_{j}(1, p) \notin C(\gamma) \big\rbrace \big| \text{ and}
\label{eq: number of selected null variables (with bootstrap)}
\\
& R_{1, p}^{(C)}(\gamma) \coloneqq \big| \widehat{\A}_{p}^{(C)}(\gamma, 1) \big| \coloneqq \big| \big\lbrace j : \widebar{\hat{\beta}}_{j}(1, p) \notin C(\gamma) \big\rbrace \big|,
\label{eq: number of selected variables (with bootstrap)}
\end{align}
respectively, where $\widebar{\hat{\beta}}_{j}(1, p) \coloneqq \frac{1}{K} \sum_{k = 1}^{K} \hat{\beta}_{j, k}(1, p)$. That is, only candidate variables whose averaged (over $K$ random experiments) coefficient estimates are not inside the confidence interval $C(\gamma)$ are selected.

We propose to construct the confidence interval in~\eqref{eq: confidence interval using dummy coefficient estimates} using the non-parametric bootstrap with $1{,}000$ resamples of the vector containing the $K = 20$ non-zero dummy coefficient estimates. Since, in all our simulations, the distribution of the bootstrapped standard errors of the averaged non-zero dummy coefficient estimates followed the standard normal distribution, we construct a normal bootstrap confidence interval (for details, see~\cite{efron1994introduction,davison1997bootstrap}). In the following theorem, we state how the most liberal confidence level $\gamma$ can be determined such that the FDR is controlled at the estimated level by the confidence-based \textit{Screen-T-Rex} selector:
\begin{thm}[FDR control - confidence-based \textit{Screen-T-Rex}]
Define $\gamma \coloneqq \inf \big\lbrace \gamma^{\prime} \in [0, 1] : R_{1, p}^{(C)}(\gamma^{\prime}) \leq R_{1, p}(0.5) \big\rbrace$ and $\widehat{\alpha}_{C} \coloneqq 1 / (R_{1, p}^{(C)}(\gamma) \lor 1)$. Suppose that $V_{1, p}^{(C)}(\gamma) \overset{d}{\leq} V_{1, p}(0.5)$, where $\overset{d}{\leq}$ denotes stochastic dominance. Then, $\FDR = \mathbb{E}[ \FDP ] \leq \widehat{\alpha}_{C}$.
\label{theorem: FDR control Screen T-Rex (with bootstrap)}
\end{thm}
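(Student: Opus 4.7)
The plan is to mirror the proof of Theorem~\ref{theorem: FDR control Screen T-Rex (no bootstrap)} almost verbatim, substituting the confidence-interval-based counts~\eqref{eq: number of selected null variables (with bootstrap)} and~\eqref{eq: number of selected variables (with bootstrap)} for $V_{1,p}(0.5)$ and $R_{1,p}(0.5)$, and then closing the gap via the stochastic dominance hypothesis spelled out in the statement.

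Concretely, I would first apply Definition~\ref{definition: FDR and FDP with T-Rex selector notation} in its confidence-based form to write
\begin{equation*}
\FDP = \dfrac{V_{1,p}^{(C)}(\gamma)}{R_{1,p}^{(C)}(\gamma) \lor 1} = \widehat{\alpha}_{C} \cdot V_{1,p}^{(C)}(\gamma),
\end{equation*}
then take expectations and invoke $V_{1,p}^{(C)}(\gamma) \overset{d}{\le} V_{1,p}(0.5)$ to obtain $\FDR = \widehat{\alpha}_{C} \cdot \mathbb{E}\big[V_{1,p}^{(C)}(\gamma)\big] \le \widehat{\alpha}_{C} \cdot \mathbb{E}\big[V_{1,p}(0.5)\big]$, and finally recycle the negative-hypergeometric argument from the proof of Theorem~\ref{theorem: FDR control Screen T-Rex (no bootstrap)} (i.e., $V_{1,p}(0.5) \overset{d}{\le} \NHG(p_{0}+p,p_{0},1)$ with mean $p_{0}/(p+1) \le 1$) to conclude $\FDR \le \widehat{\alpha}_{C}$. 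Throughout, I would follow the same convention as in Theorem~\ref{theorem: FDR control Screen T-Rex (no bootstrap)} and treat the data-dependent quantity $\widehat{\alpha}_{C}$ as a constant when pulling it outside the expectation, so that the only genuinely new ingredient relative to the ordinary \textit{Screen-T-Rex} proof is the stochastic dominance step.

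The main obstacle is not in those algebraic steps, which copy Theorem~\ref{theorem: FDR control Screen T-Rex (no bootstrap)} line by line, but in defending the stochastic dominance hypothesis itself. By construction of $\gamma$ one has $R_{1,p}^{(C)}(\gamma) \le R_{1,p}(0.5)$, i.e., the confidence-based rule never selects more variables overall than the ordinary one, yet this alone does not force the shed variables to be nulls. The intuition the hypothesis encodes is that $C(\gamma)$ is calibrated from the $K$ non-zero dummy coefficient estimates, which under the linear model~\eqref{eq: linear model} are distributionally indistinguishable from the averaged coefficients of the null originals after termination at $T=1$; a null original therefore falls outside $C(\gamma)$ with no larger probability than an independent dummy does, so its expected contribution to $V_{1,p}^{(C)}(\gamma)$ is bounded by its expected contribution to $V_{1,p}(0.5)$. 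Turning this heuristic into a proof would require characterising the joint distribution of the averaged dummy and averaged null coefficient estimates generated by the $K$ LARS runs, which is why the statement elevates it to a hypothesis rather than a derivation.
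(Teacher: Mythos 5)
Your proposal matches the paper's own proof essentially line for line: write $\FDP = \widehat{\alpha}_{C}\cdot V_{1,p}^{(C)}(\gamma)$, take expectations, apply the assumed stochastic dominance to reduce to $\mathbb{E}[V_{1,p}(0.5)]$, and reuse the negative-hypergeometric bound from Theorem~\ref{theorem: FDR control Screen T-Rex (no bootstrap)}. The paper likewise treats $\widehat{\alpha}_{C}$ as a constant outside the expectation and takes the stochastic dominance as a hypothesis rather than deriving it, so your proof is correct and takes the same approach.
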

\begin{proof}
With Definition~\ref{definition: FDR and FDP with T-Rex selector notation} and Equations~\eqref{eq: number of selected null variables (with bootstrap)} and~\eqref{eq: number of selected variables (with bootstrap)}, we obtain
\begingroup
\allowdisplaybreaks
\begin{align}
\FDR &= \mathbb{E} [ \FDP ] 
= \mathbb{E} \Bigg[ \dfrac{V_{1, p}^{(C)}(\gamma)}{R_{1, p}^{(C)}(\gamma) \lor 1} \Bigg]
= \widehat{\alpha}_{C} \cdot \mathbb{E} \Big[ V_{1, p}^{(C)}(\gamma) \Big] \\
&\leq \widehat{\alpha}_{C} \cdot \mathbb{E} \big[ V_{1, p}(0.5) \big] \leq \widehat{\alpha}_{C},
\end{align}
\endgroup
where the first inequality follows from $V_{1, p}^{(C)}(\gamma) \overset{d}{\leq} V_{1, p}(0.5)$ and the second inequality is the same as in the proof of Theorem~\ref{theorem: FDR control Screen T-Rex (no bootstrap)}.
\label{proof: theorem - FDR control Screen T-Rex (with bootstrap)}
\end{proof}

\subsection{Screening Genomics Biobanks}
\label{subsec: Screening Genomics Biobanks}
The \textit{Screen-T-Rex} selector is intended to be used for screening thousands of phenotypes in large biobanks, while only using the original \textit{T-Rex} selector in the cases where the estimated FDR is not acceptable to the user. Here, the user sets the target FDR for the original \textit{T-Rex} selector and a lower and upper bound $\alpha_{l}$ and $\alpha_{u}$, respectively, for the estimated FDRs by both versions of the \textit{Screen-T-Rex} selector. The lower bound is required to avoid solutions at very low estimated FDRs, since these would yield a low power (i.e., TPR). Algorithm~\ref{algorithm: Screening genomics biobanks} summarizes the proposed work flow.
%
\begin{table*}[b]
\centering
\begin{tabular}{@{}L{\dimexpr0.26\linewidth-2\tabcolsep\relax}@{}C{\dimexpr0.095\linewidth-2\tabcolsep\relax}@{}C{\dimexpr0.11\linewidth-2\tabcolsep\relax}@{}C{\dimexpr0.18\linewidth-2\tabcolsep\relax}@{}C{\dimexpr0.105\linewidth-2\tabcolsep\relax}@{}C{\dimexpr0.2\linewidth-2\tabcolsep\relax}@{}C{\dimexpr0.2\linewidth-2\tabcolsep\relax}}
\bfseries Methods
& \bfseries FDR control?
& \bfseries Av. FDP (in~$\%$)
& \bfseries Av. estimated/target FDR (in~$\%$)
& \bfseries Av. TPP (in~$\%$)
& \bfseries Av. sequential comp. time (hh:mm:ss)
& \bfseries Av. relative sequen- tial comp. time \\
\midrule
		\textbf{Proposed:} \\[0.0em]
		\textit{1. Screen-T-Rex (ordinary)} & \cmark & $15.96$ & $18.57$ & $47.2$ & $\boldsymbol{00$:$00$:$44}$ & $\boldsymbol{1}$ \\[0.0em]
		\textit{2. Screen-T-Rex (conf.-based)} & \cmark & $10.16$ & $12.5$ & $31.7$ & $00$:$00$:$45$ & $1.02$ \\[0.0em]
\midrule
		\textbf{Benchmarks:} \\[0.0em]
		\textit{3. T-Rex} & \cmark & $6.45$ & $10$ & $38.5$ & $00$:$06$:$39$ & $8.88$ \\[0.0em]
		\textit{4. model-X+} & \cmark & $0$ & $10$ & $0$ & $20$:$00$:$38$ & $1601.39$
\end{tabular}
\caption{For all methods, the average achieved FDP is lower than the average estimated/target FDR, i.e., all methods control the FDR. The sequential computation time (of both versions) of the proposed \textit{Screen-T-Rex} selector is more than three orders of magnitude lower than that of the \textit{model-X} knockoff+ method. Nearly one order of magnitude is gained compared to the original \textit{T-Rex} selector. Applying Algorithm~\ref{algorithm: Screening genomics biobanks} with $\alpha = 10\%$, $\alpha_{l} = 5\%$, and $\alpha_{u} = 20\%$, yields an average FDP and TPP of $15.96\%$ and $47.2\%$, respectively, without requiring Step~\ref{algorithm: Screening genomics biobanks Step 2.3}.3.
}
\label{table: results simulated GWAS}
\end{table*}
%

\section{Numerical Experiments}
\label{sec: Numerical Experiments}
We simulate a high-dimensional data setting according to the linear model in~\eqref{eq: linear model} with $n = 300$ samples and $p = 1{,}000$ predictors (i.e, candidate variables), and $p_{1} = 10$ true active variables. The noise variance $\sigma^{2}$ is chosen such that the signal-to-noise-ratio $\SNR \coloneq \Var(\X\bbeta) / \Var(\bepsilon)$ (for details, see~\cite{machkour2022TRexSelector}) takes on the values on the $x$-axes in Figure~\ref{fig: sweep p1 and sweep snr plots p = 1000}. Note that the FDP and TPP in Figure~\ref{fig: sweep p1 and sweep snr plots p = 1000} are averaged over $955$ Monte Carlo replications, respectively, and, therefore, are estimates of the FDR and TPR, respectively. A discussion of the results is provided within the caption of Figure~\ref{fig: sweep p1 and sweep snr plots p = 1000}.
\begin{algorithm}[h]
\caption{Screening Genomics Biobanks.}
\begin{alglist}
\item \textbf{Input}: $\alpha$, $\alpha_{l}$, and $\alpha_{u}$.
\label{algorithm: Screening genomics biobanks Step 1}
\item For each considered phenotype in the biobank do:
\label{algorithm: Screening genomics biobanks Step 2}
\begin{enumerate}
\item[2.1.] \textbf{Run} the \textit{Screen-T-Rex} selector and obtain the estimated FDR levels $\widehat{\alpha}$ and $\widehat{\alpha}_{C}$.
\label{algorithm: Screening genomics biobanks Step 2.1}
\item[2.2.] \textbf{Determine} the final set of selected variables $\widehat{\A}$ as follows:
\begin{equation}
\widehat{\A} \coloneqq
\begin{cases}
\widehat{\A}_{p}^{(C)}(\gamma, 1), 
\quad  
\textstyle
\begin{array}{l}
\alpha_{l} \leq \widehat{\alpha}_{C} \leq \alpha_{u} \text{ \& }\\ 
\max \lbrace \widehat{\alpha}_{C}, \widehat{\alpha} \cdot I(\widehat{\alpha} \leq \alpha_{u}) \rbrace = \widehat{\alpha}_{C}
\end{array}
\\[1em]
\widehat{\A}_{p}(0.5, 1), 
\quad   \,
\textstyle
\begin{array}{l}
\alpha_{l} \leq \widehat{\alpha} \leq \alpha_{u} \text{ \& }\\ 
\max \lbrace \widehat{\alpha}_{C} \cdot I(\widehat{\alpha}_{C} \leq \alpha_{u}), \widehat{\alpha} \rbrace = \widehat{\alpha}
\end{array}
\\[0.8em]
\varnothing, \qquad \qquad \quad \,\, \mathrm{otherwise}
\end{cases} \hspace{-0.6cm},
\end{equation}
where $I(a \leq b)$, $a, b \in \mathbb{R}$, is the indicator function that has the value one if $a \leq b$ and zero otherwise. $\varnothing$ denotes the empty set. Convention: If $\widehat{\alpha} = \widehat{\alpha}_{C}$ and all conditions in the first two cases are satisfied, then $\widehat{\A} \coloneqq \widehat{\A}_{p}^{(C)}(\gamma, 1)$.
\label{algorithm: Screening genomics biobanks Step 2.2}
\item[2.3.] If $\widehat{\A} = \varnothing$, \textbf{run} the \textit{T-Rex} selector with target FDR $\alpha$ and \textbf{determine}
\begin{equation}
\widehat{\A} \coloneqq \widehat{\A}_{L}(v^{*}, T^{*}).
\end{equation}
\label{algorithm: Screening genomics biobanks Step 2.3}
\vspace{-1pc}
\end{enumerate}
\item \textbf{Output}: Selected active set $\widehat{\A}$.
\label{algorithm: Screening genomics biobanks Step 3}
\end{alglist}
\label{algorithm: Screening genomics biobanks}
\end{algorithm}
%
\begin{figure}[t]
  \centering
	\vspace*{-0.135cm}
	\hspace*{-0.8em}
  \subfloat[]{
  		\scalebox{1}{
  			\includegraphics[width=0.48\linewidth]{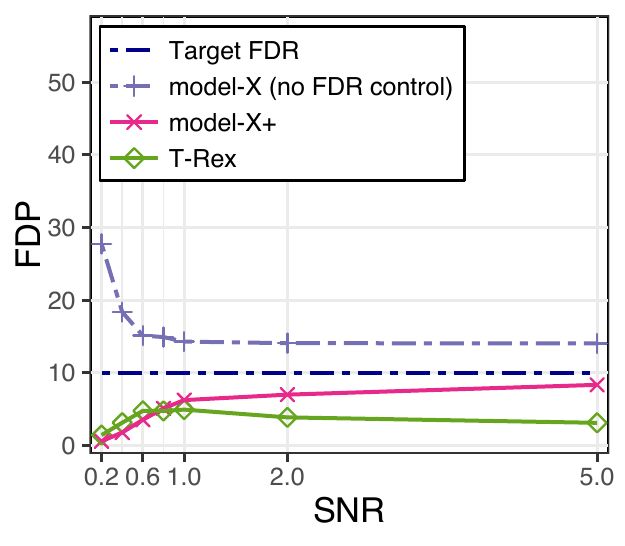}
  		}
   		\label{fig: FDP_vs_SNR_p_1000_Optimal_T_L}
   }
	\hspace*{-1.2em}
  \subfloat[]{
  		\scalebox{1}{
  			\includegraphics[width=0.49\linewidth]{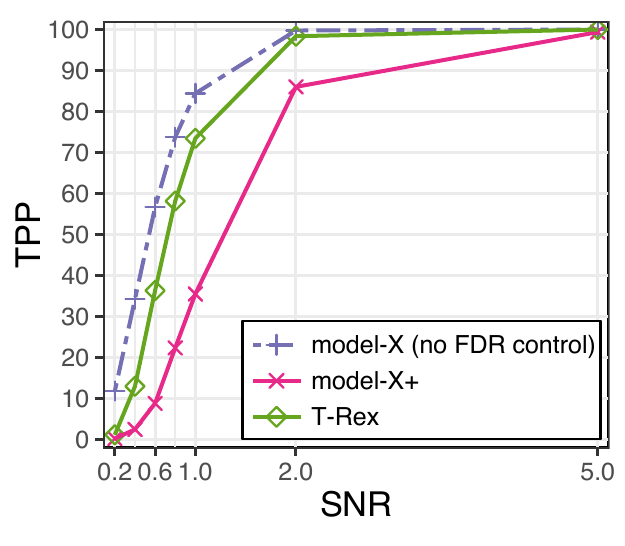}
  		}
   		\label{fig: TPP_vs_SNR_p_1000_Optimal_T_L}
   }
   \\
   	\hspace*{-0.8em}
  \subfloat[]{
  		\scalebox{1}{
  			\includegraphics[width=0.48\linewidth]{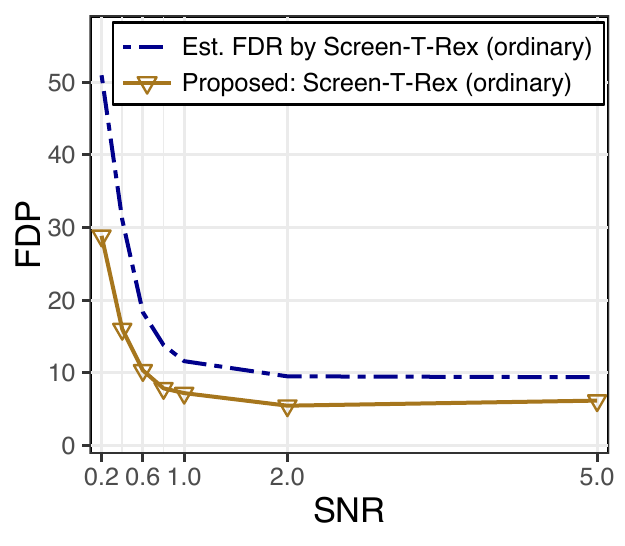}
  		}
   		\label{fig: FDP_vs_SNR_p_1000_Screen_T_Rex_ordinary}
   }
	\hspace*{-1.2em}
  \subfloat[]{
  		\scalebox{1}{
  			\includegraphics[width=0.49\linewidth]{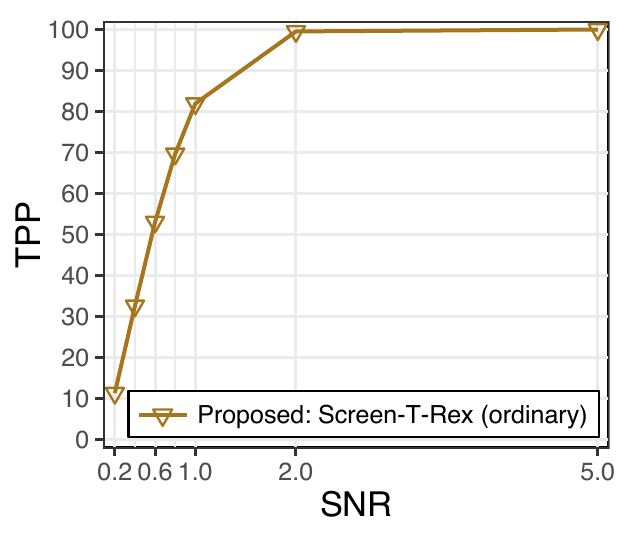}
  		}
   		\label{fig: TPP_vs_SNR_p_1000_Screen_T_Rex_ordinary}
   }
   \\
   	\hspace*{-0.8em}
  \subfloat[]{
  		\scalebox{1}{
  			\includegraphics[width=0.48\linewidth]{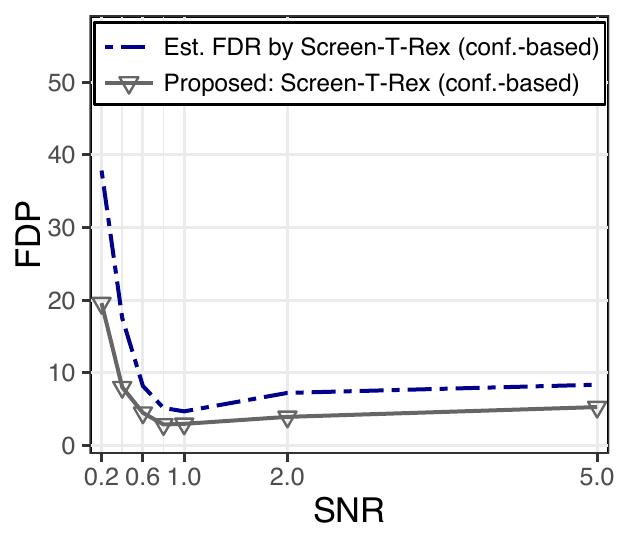}
  		}
   		\label{fig: FDP_vs_SNR_p_1000_Screen_T_Rex_confidence_based}
   }
	\hspace*{-1.2em}
  \subfloat[]{
  		\scalebox{1}{
  			\includegraphics[width=0.49\linewidth]{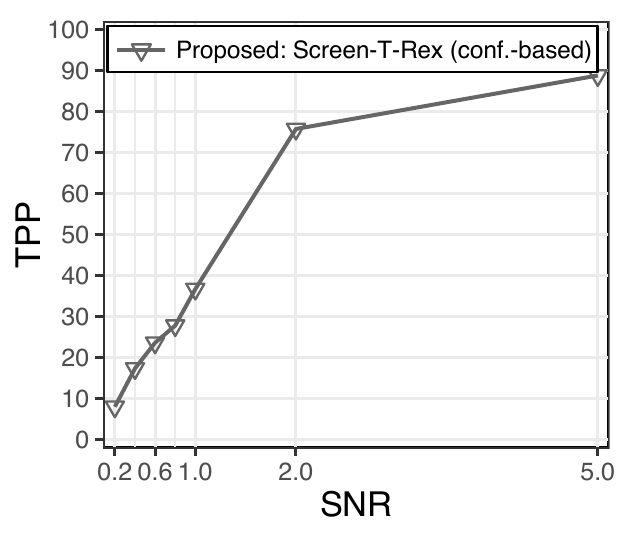}
  		}
   		\label{fig: TPP_vs_SNR_p_1000_Screen_T_Rex_confidence_based}
   }
\setlength{\belowcaptionskip}{-8pt}
  \caption{Figures~(a) and~(b)~(see also \cite{machkour2021terminating}) display the results of the benchmark methods. We observe that the \textit{T-Rex} selector and the \textit{model-X} knockoff+ method control the FDR at the target level of $10$\%, while the \textit{model-X} knockoff method fails to control the FDR. Figures~(c)~-~(f) show that the proposed ordinary and the confidence-based \textit{Screen-T-Rex} selector both control the FDR at the self-estimated levels while achieving a reasonably high TPR. The confidence-based version is capable of controlling the FDR at lower levels than the ordinary version but, in turn, achieves a lower TPR.}
  \label{fig: sweep p1 and sweep snr plots p = 1000}
\end{figure}
%

\section{Simulated GWAS}
\label{sec: Simulated GWAS}
The simulation setup and preprocessing of the data in this section is the same as the one in~\cite{machkour2021terminating}. That is, $100$ genomics data sets are simulated using the software HAPGEN2~\cite{su2011hapgen2}. It takes haplotypes from the HapMap $3$ project~\cite{international2010integrating} as an input and generates the predictor matrix $\X$ that contains $p = 20{,}000$ SNPs/columns and $n = 1{,}000$ observations/rows (i.e., $700$ cases and $300$ controls). The phenotype vector $\y$ contains ones for cases and zeros for controls. Each of the $100$ simulated data sets contains $p_{1} = 10$ true active SNPs, i.e., SNPs that are associated with the phenotype. The results in Table~\ref{table: results simulated GWAS} are averaged over all data sets. A discussion of the results is provided within the caption of Table~\ref{table: results simulated GWAS}.\footnote{Note that only the relative but not the absolute computation times are representative for similar settings, since, due to the energy crisis in Europe, the CPUs of the Lichtenberg High-Performance Computer of the Technische Universität Darmstadt operate at a reduced clock frequency.}

\section{Real World Example: HIV Data}
\label{sec: Real World Example: HIV Data}
Many antiretroviral drugs are used in HIV-$1$ infection therapies. However, mutations may decrease the susceptibility to some drugs and, thus, lead to an increased drug resistance of the virus. Therefore, it is desired to detect mutations associated with resistance against all existing drugs to determine which drugs to use for treating HIV-$1$ and to develop new drugs to which mutated HIV-$1$ viruses are highly susceptible. In order to also compare the proposed methods against classical methods for the low-dimensional setting, we consider a low-dimensional benchmark HIV-$1$ data set that was described and analyzed in \cite{rhee2005hiv, rhee2006genotypic} and served as a benchmark data set for the existing \textit{fixed-X} knockoff method \cite{barber2015controlling}. It can be downloaded from a Stanford University database.\footnote{URL (last access: $31$st January $2023$):\\ \url{https://hivdb.stanford.edu/pages/published_analysis/genophenoPNAS2006/}}
The performance of the proposed \textit{Screen-T-Rex} selector and the benchmark methods in detecting the mutations that are associated with HIV-$1$ drug resistance for individual \textit{protease inhibitor} (PI)-type drugs is assessed. The same setup as in \cite{barber2015controlling} is used, i.e., the same preprocessing steps are applied and the same benchmark mutation positions from treatment-selected mutation (TSM) lists in \cite{rhee2005hiv} are used. Table~\ref{table: Overview of the HIV-1 dataset (PI)} and Figure~\ref{fig: HIV_PI_drugs} display and discuss the results.
%
\begin{table}[t]
\caption{Results for the HIV-$1$ PI-type drugs: Applying Algorithm~\ref{algorithm: Screening genomics biobanks} with $\alpha = 3\%$, $\alpha_{l} = 2\%$, and $\alpha_{u} = 4\%$, the selected active set of the original \textit{T-Rex} selector is only picked for ATV. For APV and IDV, the results of the confidence-based \textit{Screen-T-Rex} selector are picked. For the remaining drugs, the results of the ordinary \textit{Screen-T-Rex} selector are picked (compare results in Figure~\ref{fig: HIV_PI_drugs}).}
\centering
\begin{tabular}{
@{}L{\dimexpr0.18\textwidth-2\tabcolsep\relax}
@{}C{\dimexpr0.06\textwidth-2\tabcolsep\relax}
@{}C{\dimexpr0.06\textwidth-2\tabcolsep\relax}
@{}C{\dimexpr0.08\textwidth-2\tabcolsep\relax}
@{}C{\dimexpr0.11\textwidth-2\tabcolsep\relax}
@{}R{\dimexpr0.13\textwidth-2\tabcolsep\relax}
}
\bfseries Drug
& $n$
& $p$
& Target FDR
& Est. FDR (ordinary)
& Est. FDR (conf.-based)
\\
\midrule

		Amprenavir 	(\textbf{APV})  & 767 & 201 & $3$ \% & $3.57$ \% & $3.70$ \% \\
  		Atazanavir		(\textbf{ATV})  & 328 & 147 & $3$ \% & $4.76$ \% & $0.00$ \% \\
   		Indinavir			(\textbf{IDV})   & 825 & 206 & $3$ \% & $3.33$ \% & $3.33$ \% \\
   		Lopinavir			(\textbf{LPV})  & 515 & 184 & $3$ \% & $3.85$ \% & $0.00$ \% \\
        Nelfinavir			(\textbf{NFV})  & 842 & 207 & $3$ \% & $3.70$ \% & $0.00$ \% \\
        Ritonavir			(\textbf{RTV})  & 793 & 205 & $3$ \% & $3.33$ \% & $2.86$ \% \\
        Saquinavir		(\textbf{SQV}) & 824  & 206 & $3$ \% & $3.45$ \% & $0.00$ \% \\
\end{tabular}
\label{table: Overview of the HIV-1 dataset (PI)}
\end{table}
%
\begin{figure}[t]
  \centering
  \hspace*{-0.6em}
  		\scalebox{1}{
  			\includegraphics[width=1\linewidth]{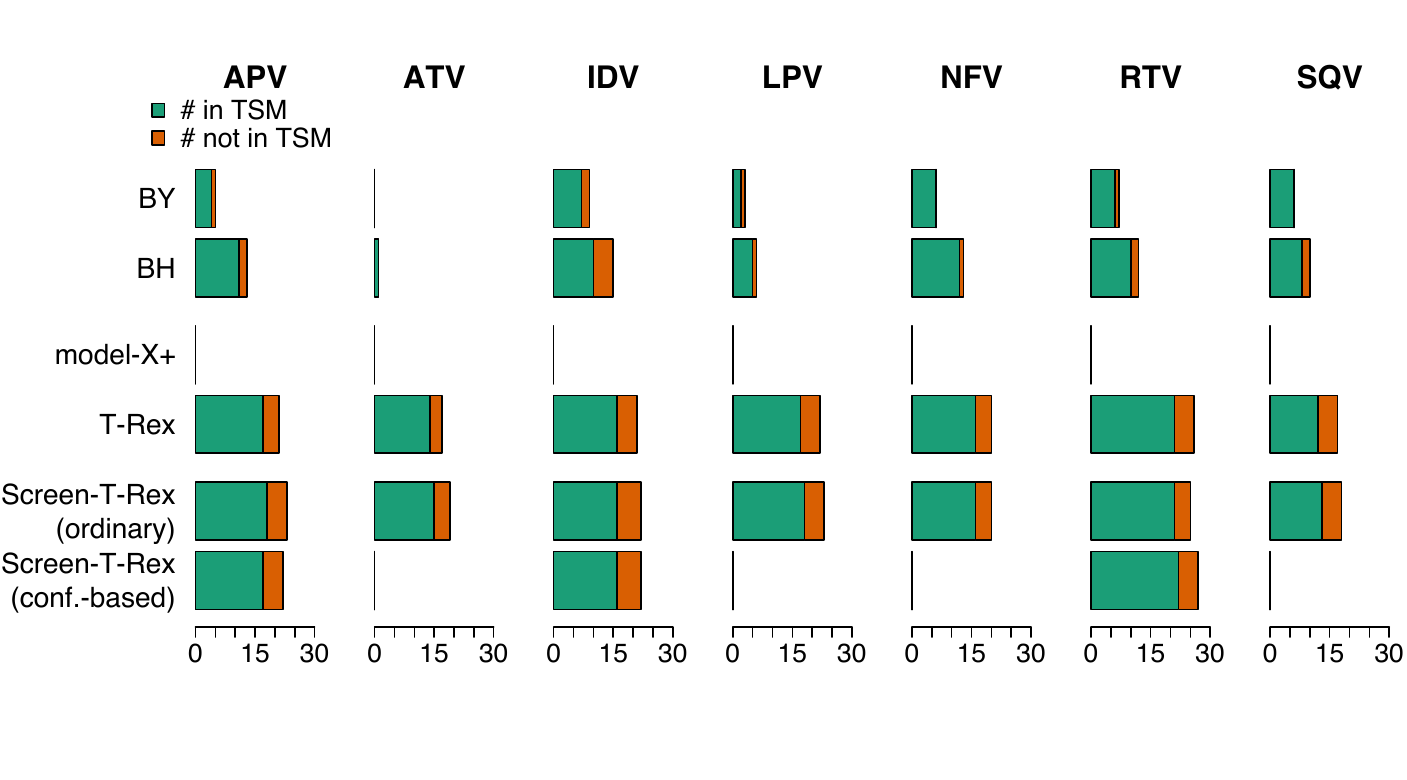}
  		}
   		\label{fig: HIV_PI_drugs}
   		\setlength{\belowcaptionskip}{-8pt}
 \caption{Number of selected mutations that are reported (green) and not reported (orange) in TSM lists for HIV-1 PI-type drugs. The \textit{T-Rex} methods dominate the benchmark methods in terms of the number of selected mutations reported in TSM lists. Moreover, a few potentially relevant mutations that are not reported in TSM lists are detected.}
  \label{fig: HIV_PI_drugs}
\end{figure}
%

\section{Conclusion}
\label{sec: Conclusion}
The \textit{Screen-T-Rex} selector, a fast FDR-controlling variable selection method for large-scale genomics biobanks, was proposed. Using the proposed method in combination with the original \textit{T-Rex} selector, an efficient algorithm for conducting thousands of large-scale GWAS was proposed. In our future research, we will apply the proposed methods to genomics data from the UK biobank.

\cleardoublepage

\bibliographystyle{IEEEtran}
\bibliography{bibliography}

\typeout{get arXiv to do 4 passes: Label(s) may have changed. Rerun}

\end{document}